\newcolumntype{$}{>{\global\let\currentrowstyle\relax}}
\newcolumntype{^}{>{\currentrowstyle}}
\titlespacing*{\section}{2pt}{1.5ex}{1.5ex}
\titlespacing*{\subsection}{2pt}{1ex}{1ex}
\newcolumntype{L}{>{\phantom{$-$}}l}       
\newcommand\restr[2]{{
  \left.\kern-\nulldelimiterspace 
  #1 
  \vphantom{\big|} 
  \right|_{#2} 
  }}
\newcommand{\mde}{model-driven engineering}
\newcommand{\mmodel}{metamodel}
\newcommand{\eo}{edit operation}
\newcommand{\Eo}{Edit operation}
\newcommand{\scg}{simple change graph}
\newcommand{\Scg}{Simple Change Graph}
\newcommand{\eeo}{edit pattern}
\newcommand{\ceo}{completion operation}
\newcommand{\approach}{our approach}
\newcommand{\edgel}{EdgeList}
\newcommand{\generationPhase}{generation phase}
\newcommand{\trainingPhase}{training phase}
\newcommand{\GenerationPhase}{Generation Phase}
\newcommand{\TrainingPhase}{Training Phase}
\newcommand{\ada}{{\sf text-ada-001}}
\newcommand{\curie}{{\sf text-curie-001}}
\newcommand{\davinci}{{\sf text-davinci-003}}
\newcommand{\numCompletionModels}{two}
\newcommand{\averageCorrectCompletion}{87.60}
\newcommand{\averageRankCompletion}{1.55}
\newcommand{\averageNumberCompletions}{2.71}
\newcommand{\averageCorrectAda}{2.17}
\newcommand{\averageCorrectCurie}{2.13}
\newcommand{\averageCorrectDavinci}{1.00}
\newcommand{\averageTokenAccuracy}{0.969}
\newcommand{\minTokenAccuracy}{0.921}
\newcommand{\maxTokenAccuracy}{0.990}
\newcommand{\avgIncorrectScgs}{0.901}
\newcommand{\avgIncorrectGraphs}{2.26}
\newcommand{\totalCost}{347\$}
\newcommand{\figref}[1]{Figure~\ref{#1}}
\newcommand{\secref}[1]{Section~\ref{#1}}
\newcommand{\defref}[1]{Definition~\ref{#1}}
\newcommand{\tblref}[1]{Table~\ref{#1}}
\newcommand{\countRepo}{24}
\newtheorem{theorem}{Theorem}
\newcommand{\inExtension}[2]{\ifthenelse{\boolean{extension}}{\color{black}#1\color{black}}{\color{red}#2\color{black}}}
\definecolor{darkblue}{rgb}{0,0,.75}
\definecolor{eminence}{RGB}{108,48,130}
\theoremstyle{definition}
\newtheorem{definition}{Definition}[section]
\begin{document}


%
\title{Towards Automatic Support of Software Model Evolution with Large Language~Models}












\author {Christof Tinnes}
\affiliation {%
\institution {Siemens AG}
\department {Technology}
\city {Munich}
\country {Germany}
}
\affiliation {%
\institution {Saarland University}
\department {Informatics Campus}
\city {Saarbrücken}
\country {Germany}
}
\email {christof.tinnes@siemens.com}

\author {Thomas Fuchß}
\affiliation {%
\institution {HKA Karlsruhe}
\city {Karlsruhe}
\country {Germany}
}
\email {thomas.fuchss@hka-iwi.de}

\author {Uwe Hohenstein}
\affiliation {%
\institution {Siemens AG}
\department {Technology}
\city {Munich}
\country {Germany}
}
\email {uwe.hohenstein@siemens.com}

\author {Sven Apel}
\affiliation {%
\institution {Saarland University}
\department {Informatics Campus}
\city {Saarbrücken}
\country {Germany}
}
\email {apel@cs.uni-saarland.de}
%



\begin{abstract}
Modeling structure and behavior of software systems plays a crucial role, in various areas of software engineering. As with other software engineering artifacts, software models are subject to evolution. Supporting modelers in evolving models by model completion facilities and providing high-level edit operations such as frequently occurring editing patterns is still an open problem. Recently, large language models (i.e., generative neural networks) have garnered significant attention in various research areas, including software engineering. In this paper, we explore the potential of large language models in supporting the evolution of software models in software engineering. We propose an approach that utilizes large language models for model completion and discovering editing patterns in model histories of software systems. Through controlled experiments using simulated model repositories, we conduct an evaluation of the potential of large language models for these two tasks. We have found that large language models are indeed a promising technology for supporting software model evolution, and that it is worth investigating further in the area of software model evolution.
    
\end{abstract}


\maketitle              
\newcommand{\code}[1]{\small {\sl{#1}}}

\section{Introduction}
Models play an important role in modern software and system development~\cite{RodriguesDaSilva2015}, 
software documentation~\cite{kruchten1995, UML2017}, 
system architecture~\cite{SysML2019}, 
simulation~\cite{dabney2004mastering}, 
industry automation~\cite{iec61131plc}, 
and user interface layout design~\cite{Chang2010}. 
For disambiguation, we refer to these models as \emph{software models}.

All artifacts in software and system development are typically subject to evolution, which also applies to software models~\cite{visser2007model}: 
Software models have to evolve because of changing requirements, but they can also be subject to bugfixes or refactorings. For example, a shopfloor automation system has to adapt to changed production processes (i.e., changing requirements), or the handling of an exceptional situation has to be corrected (i.e., a bugfix).

From the perspective of a user of a modelling tool, we can understand the evolution of software models as a sequence of \emph{\eo{s}}:
To change or evolve the model, the tool user executes \eo{s} (e.g., using mouse clicks and keyboard strokes) provided by the tool. For example, the user could modify the velocity of a shopfloor conveyor belt, or add another sensor input to the warning mechanism of a large drive. \Eo{s} can be specified by in-place model transformations~\cite{Mens2006,czarnecki2006feature}, which in turn can be formalized as graph transformations \cite{Ehrig2004,Biermann2012, ehrig2005generation}.

For the evolution of software models, modelling tools typically provide an initial set of \eo{s} (e.g., adding an attribute to an UML class). Nevertheless, since the usage of a (domain-specific) language is also subject to evolution and since (project-specific) usage patterns might emerge, this initial set of \eo{s} can rarely be exhaustive. For example, in object-oriented design, design patterns~\cite{gamma1995design} are widely used but are not part of the UML language specification~\cite{UML2017}. Likewise, in electronic circuit design, patterns such as a low-pass filter will often be used and might not be part of the tool`s standard component library (e.g., consisting of capacitors, resistors, diodes). 
Consequently, approaches for the specification of new user-specific model transformations---and \eo{s}, in particular---have been developed~\cite{arendt2010henshin, Balogh2006, kolovos2012, omg2013qvt}. 

Using a specification language for defining \eo{s} poses two challenges:
First, specifying new \eo{s} requires the knowledge of the specification language and of the concrete domain-specific language (or \mmodel{}) for which the \eo{s} have to be specified. Second, domain-specific \eo{s} are often not explicitly known, that is, they are a form of tacit knowledge~\cite{Polanyi1958}. Externalizing the knowledge can therefore be hard or even impossible for domain experts.

While it might even be necessary to define project-specific \eo{s}~\cite{Tinnes2021,Tinnes2022} the overhead for their manual specification is often not economically feasible for many projects and tool providers~\cite{Kehrer2017, Mokaddem2018, Kappel2012}.
Therefore, researchers have sought to support or even automate the specification of model transformations and \eo{s}, in particular.
These approaches range from a visual or concrete syntax-based support \cite{Avazpour2015, acrectoaie2018vmtl, holldobler2015systematically}, over supervised approaches that turn single examples \cite{Langer2009, Gray2011, Kappel2012} or a set of examples \cite{Kehrer2017, Mokaddem2018} into model transformation specifications (so-called model transformation by-example (MTBE)), to unsupervised approaches such as the generation of model transformations out of a given meta-model \cite{Kehrer2016, mazanek2009generating, kehrer2013generating}.
More recently, mining model transformations from the software model history using frequent subgraph mining has been proposed~\cite{Tinnes2021}.
Mining approaches are especially appealing since they do not require any manual specification, no hand-crafting of examples (as in MTBE), and they are not limited to simple well-formedness rules that can be derived out-of-the \mmodel{}.

Unfortunately, frequent subgraph mining and related techniques are not scalable; in particular large differences between two model revisions might lead to extremely long execution times or large memory consumption~\cite{Tinnes2021}. 
Furthermore, mining approaches lack abstraction capabilities. For example, two \eo{s} that are identical but for one type---even with common parent type in the \mmodel{}---are seen as two different \eo{s} and could even be missed by the approach. 



From the perspective of software model evolution it would also be desirable to have context-dependent \emph{completions}, instead of learning a fixed set of \eo{s}. That is, the user would benefit from the recommendation of a \emph{completion operation} that complements the \eo{s} that have been applied to the model.
    
Motivated by recent breakthroughs of generative neural network architectures for various tasks such as code generation and completion~\cite{chen2021codex}, program repair~\cite{mashhadi2021programrepair}, and even to the generation of graph like structures---most notably molecular generation~\cite{bagal2021molgpt}---we investigate feasibility of generative language models for \eeo{} mining and software model completion. The rationale is that, if we fine-tune the language model to generate completions for software models, it should have learned the underlying patterns of the (modelling) language in question. We propose and study an approach to learn edit and completion patterns from the software model history and to extract the corresponding operations from a generative language model. Specifically, we define and use an encoding for serializations of model difference graphs that we will use to fine-tune the language models. Using a synthetic dataset with control over the actually performed \eo{s}, we then evaluate whether we can generate serializations of \eo{s} using the fine-tuned language models and whether we can complete software model changes correctly. We find that, regarding completion, the approach performs surprisingly well and generates the correct completion in \averageCorrectCompletion{}\% of our samples. The approach is also able to generate \eo{s}, although, in some cases, not all applied \eo{s} were generated.

In a summary, we make the following contributions:
\begin{itemize}
    \item We formalize the concepts of \eo{s}, \ceo{s}, and \eeo{s} such that they are suitable for the mining perspective, and show that this formalization is consistent with earlier constructions. 
    \item We propose an approach to use large language models for mining \eeo{s} and software model completion.
    \item We evaluate the approach in a controlled experiment. We find that the approach can provide correct software model completions and also generate \eo{s} that have been applied in the simulation of the software model repositories.
\end{itemize}

\section{Foundations and State-of-the-Art}\label{sec:background}
\subsection{Software Models and Edit Operations}\label{sec:eos}
In this section we will provide background about software models and we present a new formalization of \eo{s} that is well suited for studying mining approaches. We will then see that this formalization is compatible with earlier formalizations.

In \mde{} the language for a software model (i.e., its abstract syntax and static semantics) is typically defined by a \mmodel{} $\mathcal{TM}$. A model is an instance of its \mmodel{}. 
We denote by $\mathcal{M}$ the set of all valid models (according to some \mmodel{}). 
This can be formalized using typed attributed graphs~\cite{Biermann2012,Ehrig2004}.

\begin{definition}[Abstract Syntax Graph]\label{def:asg}
    An \emph{abstract syntax graph} $G_m$ of a model $m \in \mathcal{M}$ is a typed attributed graph, typed over an attributed type graph $TG$ that is given by the \mmodel{} $\mathcal{TM}$.
\end{definition}

The idea of typed graphs is to define a graph homomorphism (i.e., a function from the typed graph $G$ to the type graph $TG$). Details of this formalization are given in the work by Biermann et al.~\cite{Biermann2012}.
The abstract syntax graph of a model (together with the type graph) contains all information that a model contains. The \emph{concrete (visual) syntax} of a model is only an external representation on the model that provides visual cues for the end-user of a model. It is meant to allow for easier handling and understanding of the models and therefore sometimes even hides information of the model. 

In the present work, we are concerned with model repositories. We assume that the modelling tool already takes care of checking the correct typing of the models, and we therefore expect that the models are correctly typed. We therefore work with a simplified graph representation of the models in which the abstract syntax graph is just a \emph{labeled directed graph} with node labels equal to the node type names and edge labels equal to the edge type names of the abstract syntax graph from \defref{def:asg}. 

\begin{definition}[Labeled Directed Graph]
    Given a label alphabet $L$, a \textit{labeled directed graph} $G$ is a tuple $(V,E,\lambda)$, where $V$ is a finite set of nodes, $E$ is a subset of $V \times V$, called the edge set, and $\lambda: V \cup E \to L$ is the labeling function, which assigns a label to nodes and edges. 
\end{definition}

We refer to the set of all directed labeled graphs by $\mathcal{G}$. In the following, we will use the labeled graphs corresponding to the model instead of the graphs from \defref{def:asg}.
Rather than working directly on the abstract syntax graph of the models, we will mostly be working with model differences.
\begin{definition}[Structural Model Difference]
    A \emph{structural model difference} $\Delta_{mn}$ of a pair of model versions $m$ and $n$ is obtained by matching corresponding model elements in the model graphs $G_{m}$ and $G_{n}$ (using a model matcher~\cite{stephan2013survey}, e.g., EMFCompare~\cite{brun2008model} or SiDiff~\cite{schmidt2008constructing}). Then, there are added elements (the ones present in $G_{n}$ but not in $G_{m}$), removed element (the ones present in $G_{m}$ but not in $G_{n}$), and preserved elements which are present in $G_{m}$ and $G_{n}$.
\end{definition}

We assume here that this matching is deterministic, that is, given two models $m, n \in \mathcal{M}$, we get a unique structural model difference $\Delta_{mn}$.
The structural model difference can be represented as a {\em difference graph}~\citep{ohrndorf2021history} $G_{\Delta{mn}}$, where the nodes carry some prefix label ``Add'', ``Preserve'', or ``Remove'', and matching elements (i.e., the preserved ones) from $G_{m}$ and $G_{n}$ are unified with each other (i.e., the will be present only once). 

We define a \emph{\scg{}} to be the smallest subgraph comprising all changes in the difference graph $G_{\Delta_{mn}}$. 

\begin{definition}[\Scg{}]
Given a difference graph $G_{\Delta_{mn}}$, a \emph{\scg{}} $SCG_{\Delta_{mn}} \subseteq G_{\Delta_{mn}}$ is derived from $G_{\Delta_{mn}}$ by first selecting all the elements in $G_{\Delta_{mn}}$ representing a change (i.e., added, removed nodes and edges) and, second, adding preserved nodes that are adjacent to a changed edge. The \scg{} is the smallest subgraph of $G_{\Delta_{mn}}$ containing all changed nodes and edges.
\end{definition}

\begin{definition}[Endogenous model transformation]
An \emph{endogenous model transformation} is a pair $t = (m,n) \in \mathcal{M} \times \mathcal{M}$. We call $m$ the \emph{source model} and $n$ the \emph{target model} of the transformation and $\mathcal{T} \stackrel{\text{def}}{=}\mathcal{M} \times \mathcal{M}$ the space of endogenous model transformations.
\end{definition}

We can then also define a function $SCG\colon \, \mathcal{T} \to \mathcal{G}$ that takes a model transformation (i.e., a pair of models) as input and returns the simple change graph for the corresponding model difference.

We can use this map $SCG$ to define an equivalence relation on $\mathcal{T}$ by
\begin{alignat*}{2}
&t_1 = (m,n) &&\sim t_2 = (k,l)\,, \qquad \text{if and only if}\, \\
&SCG_{\Delta_{mm}} &&= SCG_{\Delta_{kl}}.
\end{alignat*}
It is straightforward to see that this relation indeed defines an equivalence relation (i.e., the relation is reflexive, symmetric, and transitive). We can therefore define the quotient set $\mathcal{T}/{\sim}$, which---by construction---is set isomorphic to the set of \scg{s}, that is, the range of the map $SCG$.

We can use this construction to formally define the concept of an \emph{\eo{}}.

\begin{definition}
    An \emph{\eo{}} is an equivalence class in the set $\mathcal{E}\stackrel{\text{def}}{=} \mathcal{T}/{\sim}$. An \eo{} is therefore a set of model transformations that have the same simple change graph.
\end{definition}

We can also interpret an \eo{} as a template for a rule to transform a model $m$ into a model~$n$:

For a \scg{}, we call the subgraphs of ``Remove'' and ``Preserve'' nodes the left-hand side graph $L$, and the ``Add'' and ``Preserve'' nodes the right-hand side graph~$R$. The embedding of $L$ and $R$ along the preserved nodes $K$ (i.e., $L \hookleftarrow K \hookrightarrow R$, where $\hookleftarrow$ denotes an injective homomorphism, i.e., an embedding) in the \scg{} defines how to remove the ``Remove'' nodes from $m$ and glue the ''Add'' nodes along $K$. Given an \eo{} $\varepsilon$ and a concrete model $m$, one can define a matching $\operatorname{match}: L~\hookrightarrow~G_{m}$, and perform the removal of ``Remove'' nodes and the gluing of ``Add'' nodes as defined by the \scg{} corresponding to $\varepsilon$, and then set concrete attributes. This yields the corresponding model $n$ with $(m, n) \in \varepsilon$, and this way an \eo{} $\varepsilon \in \mathcal{E}$ can be interpreted as a template for a model transformation in agreement with previous constructions \cite{Biermann2012,Kehrer2015,Tinnes2021}. We therefore also write $m \stackrel{\varepsilon}{\to} n$ to denote a concrete element (i.e., a model transformation) in the equivalence class $\varepsilon \in \mathcal{E}$.

The set of \eo{s} obtained by this construction is huge---infinite to be more precise---and it contains also ``operations'' such as constructing a large model from scratch (i.e., taking $m \coloneqq \bot$, the empty model and $n \in \mathcal{M}$ a large model). This does not coincide with more pragmatic definitions, for example, as the set of operations provided by the modelling tool, which typically is finite. 

A transition from $m \stackrel{\varepsilon}{\to} n$ can usually also be obtained by a sequence (aka. \emph{edit script} \cite{Kehrer2013EditScript}) $$m \stackrel{\varepsilon_{1}}{\to} m_{1} \stackrel{\varepsilon_{2}}{\to} \dots  \stackrel{\varepsilon_{k-1}}{\to}m_{{k-1}}\stackrel{\varepsilon_k}{\to}n.$$


\begin{definition}
    A set $S \subset \mathcal{E}$ is called a \emph{generator} for $\mathcal{E}$, if every model can be reached by a sequence of \eo{s} in $S$, that is,

    \begin{alignat*}{2}
    &\forall \, m \in \mathcal{M}.  &&\exists \,\varepsilon_1,\dots,\varepsilon_{k_m} \in S, \, m_{1},\dots,m_{{k_m-1}} \in \mathcal{M}\\
     &\text{such that}  \quad  &&\bot \stackrel{\varepsilon_1}{\to} m_{1} ‚\stackrel{\varepsilon_2}{\to}  \dots  \stackrel{\varepsilon_{k_m-1}}{\to} m_{{k_m-1}}\stackrel{\varepsilon_k}{\to}m.
    \end{alignat*}
\end{definition}

An example for a generator $S$ is the set of elementary \eo{s} that can be derived from a \mmodel{}, as given in the work of Kehrer et al.\cite{Kehrer2016}. This set is finite and contains rather fine-grained \eo{s}. Any subset $E \subset \mathcal{E}$ can be completed to a generator by joining it with an existing generator (e.g., the set of elementary \eo{s}), therefore ensuring that all valid models can be reached. 

In this work, we explore the continuum between elementary edit operations and the set $\mathcal{E}$ of all edit operations. Our goal is to identify those higher-level \eo{s} whose effect is actually observable in model histories, providing empirical evidence that these are meaningful edit operations from a modeler's point of view. We call these higher-level \eo{s} \emph{\eeo{s}}.
Furthermore, we are interested in completing software models, that is, for an observed evolution $m \stackrel{\varepsilon}{\to} n$, we want to find a completion $\gamma \in \mathcal{E}$, such that $m \stackrel{\varepsilon}{\to} n \stackrel{\gamma}{\to} c$ is a meaningful (i.e., observable) completion. We call this $\gamma$ a \emph{\ceo{}}.

To give a concrete example for the difference between $\mathcal{E}$, a generator $S$, and \eeo{s}, consider the modelling language SysML, which is used in system engineering~\cite{SysML2019}. The set $\mathcal{E}$ would then include very fine-grained elementary \eo{s} from $S$, for example, adding a port to a component or adding a connector between two ports, but also very large \eo{s}, for example for setting up the entire system architecture of a manufacturing execution system. An \eeo{} could be given, for example, by ``adding an interface'', that is, adding source port, target port, and connector in one step. After a port has been added to a component, a \ceo{} could be the completion of adding a target port and connecting the ports via a connector.


 



\todo[inline]{\begin{theorem}
Suppose the size of the connected components is constant and the 
and the number of pattern occurrences per subgraph is independent of the pattern. 
Then serializing graphs and sampling most probable edges is equivalent to transaction based frequent subgraph mining.
\end{theorem}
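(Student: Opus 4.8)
The plan is to make both sides of the asserted equivalence precise and then match them step by step. On the mining side, transaction-based frequent subgraph mining treats the multiset of \scg{s} extracted from the histories as its transaction database and, for a candidate subgraph $P$, works with the support $\operatorname{supp}(P)=|\{\Delta : P \subseteq SCG_{\Delta}\}|$, growing $P$ edge by edge and keeping the extensions whose support stays above a threshold (the gSpan/pattern-growth view). On the language-model side, after fine-tuning on the \edgel{} serializations we obtain a distribution $p_{\theta}$ over edge sequences, and generation repeatedly emits $\arg\max_{e} p_{\theta}(e \mid c)$, where $c$ is the prefix emitted so far. The first step is to argue that, in the sufficiently expressive, cross-entropy-optimal limit, $p_{\theta}$ equals the empirical next-edge distribution of the corpus: $p_{\theta}(e \mid c)$ is the fraction of serialized \scg{s} whose \edgel{} is consistent with prefix $c$ and whose next token is $e$. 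This reduces the statement to a combinatorial identity between empirical next-edge frequencies and pattern supports.

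The second step is to neutralize the order that serialization imposes but that subgraph mining ignores. Using a canonical \edgel{} (a DFS code as in gSpan), every \scg{} has a unique serialization, prefixes of canonical codes are in bijection with connected subpatterns, and ``the model emitted prefix $c$'' is the same event as ``the subpattern $P_{c}$ encoded by $c$ was selected''. Hence ``emit the most probable next edge'' becomes ``extend $P_{c}$ by the edge $e$ that co-occurs with $P_{c}$ in the most of the corpus'', and iterating $\arg\max$ traces out exactly the maximum-support one-edge extension at each step --- the greedy move of a transaction-based miner; backtracking or sampling below the mode then enumerates the remaining frequent extensions, so the two procedures visit the same pattern lattice and a probability threshold on $p_{\theta}$ pulls back to a support threshold on $P$.

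The third step is where the two hypotheses do their work: they are exactly what removes the mismatch between ``co-occurrence count'' and ``support''. Counting next-edge events overweights a transaction by (i) the length of its serialization, since a longer code contributes more prefix/continuation pairs, and by (ii) the number of occurrences of $P_{c}$ inside that transaction, since each embedding contributes its own continuation event. Constant connected-component size turns (i) into a global constant that cancels in the normalization, so every transaction contributes with unit weight --- the transaction-counting semantics of support; independence of the number of pattern occurrences per subgraph from the pattern turns (ii) into a constant that also cancels, so the empirical co-occurrence count collapses to the number of distinct transactions containing the pattern, i.e., to $\operatorname{supp}$. With both weights gone, $p_{\theta}(e \mid c) \propto \operatorname{supp}(P_{c}\cup\{e\})$ with a factor independent of $e$, so ``most probable edge'' and ``maximum support'' pick out the same set.

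The step I expect to be the main obstacle is the first one: a concrete neural language model does not literally compute empirical frequencies, so the clean statement is really about the Bayes-optimal predictor under log-loss --- equivalently, an unsmoothed variable-order Markov model over the canonical code --- and one has to argue that the \edgel{} encoding was chosen precisely so that such a predictor is realizable and that fine-tuning approximates it. A secondary subtlety is that \scg{s} are typically disconnected, one component per \eo{}, so ``pattern'' has to be read component-wise and the canonical code must list components in a fixed order; the constant-component-size hypothesis is also what keeps this multi-component bookkeeping from silently reintroducing a length bias.
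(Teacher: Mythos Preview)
Your outline is coherent and identifies the right pressure points, but it is structured quite differently from the paper's own argument. The paper's proof is a two-line sketch (explicitly tagged as work in progress): Step~1 asserts that the probability of an edge extension coincides with the \emph{maximum-independent-set} frequency measure $f_{\mathrm{MIS}}$, which is anti-monotone (downward closed); Step~2 then uses the inequality $f_{\mathrm{MIS}} \geq f$, where $f$ is the transaction-based support, to conclude that highly probable edge extensions are always frequent in the transaction sense. So the paper does not build the direct proportionality $p_{\theta}(e\mid c)\propto \operatorname{supp}(P_{c}\cup\{e\})$ that you aim for; instead it routes the comparison through an intermediate overlap-based support measure from single-graph mining and relies on downward closure rather than on an explicit normalisation argument.

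What your route buys is transparency about where the two hypotheses enter: constant component size kills the length bias from serialization, and pattern-independent occurrence counts kill the multiplicity bias, leaving bare transaction support. The paper's route, by contrast, never unpacks the hypotheses in that way; it leans on the known fact that $f_{\mathrm{MIS}}$ is anti-monotone and dominates transaction support, so the pruning lattice of the miner and the probability-threshold lattice of the sampler agree. Your version is closer to a full proof; the paper's is a pointer to the relevant support-measure machinery. One concrete caution on your side: the claim that ``prefixes of canonical codes are in bijection with connected subpatterns'' is too strong as stated --- every canonical-code prefix determines a connected subpattern, but not every connected subpattern of a given \scg{} arises as a prefix of \emph{that} graph's canonical code --- so when you formalise Step~2 you should phrase the correspondence at the level of the pattern lattice (minimum DFS codes over the whole corpus), not per-transaction prefixes.
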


\begin{proof}

Step 1: The probability of edge extensions is equivalent to the maximum independent set frequency measure, which obeys the downward closure property. \par
Step 2: $f_{MIS} > f$, and therefore a probable edge extension will find frequent patterns. 
\todo[inline]{WIP}
\end{proof}}

\subsection{Automatic Inference of Edit Operations}
Since simple operations on the abstract syntax graph of a software model are very fine-grained and may lead to inconsistent models, and since structural model differences are too specific, there has been a long history of attempting to infer \emph{relevant} \eo{s} (and model transformations in general) from available information (e.g., the \mmodel{}, given examples, or a model history).

\begin{definition}[Automatic Inference of Edit Operations]
Given a \mmodel{} $\mathcal{TM}$ with models $\mathcal{M}$, \emph{automatic inference of edit operations} is a computable function 
$I\colon 2^{\mathcal{M}} \to 2^{\mathcal{E}}$ that, given a set of models (and their \mmodel{}) $\mathcal{M}' \subset \mathcal{M}$, computes a set of \eo{s} $\mathcal{E}' \subset \mathcal{E}$.
\end{definition}

There are \emph{supervised} and \emph{unsupervised} approaches to the inference of \eo{s}.
One branch of supervised approaches are demonstration approaches, were a tool user presents the transformation steps to an operation recorder \cite{Yu2011,Langer2009}. Typically, these approaches require some manual post-processing, for example, edit conditions have to be refined manually \cite{Langer2009}.
Another branch of supervised approaches include by-example approaches, with which the tool user specifies a set of examples (and sometimes also counter examples) and the tool automatically infers the model transformations \cite{Jalali2012, Kehrer2017}. These approaches have been motivated by the seminal work of Varró \cite{Varro2006}, who proposed by-example learning for exogenous model transformations.
Furthermore, heuristic approaches \cite{Mokaddem2018} applying search-based techniques to finding a set of refactorings have been proposed. 

Unsupervised approaches include generative approaches, deriving model transformations from the \mmodel{}, and mining approaches.
Generative approaches have been proposed in the area of model transformation testing \cite{Brottier06mmfuzz}. Also, more recently, generative fuzzing approaches based on language models have been proposed that try to generate models with similar properties to real-world models~\cite{Shrestha2021SLGPT}. \Eo{s} are only indirectly addressed within these generative approaches. It has been shown that a complete set of consistency preserving \eo{s} can also be derived from the \mmodel{} \cite{Kehrer2016, mazanek2009generating}. These operations capture static constraints that are already present in the \mmodel{} and are typically very simple operations.
More recent MTBE approaches also use neural networks to learn exogenous model transformations, for example, Burgueño et al.~\cite{Burgueno2019, Burgueno201922} investigate learning exogenous model transformations from examples using long-short-term memory neural networks. However, these approaches need concrete input-output model pairs and have not been evaluated for endogenous model transformations.

Recently, mining model transformations from the modelling history using graph mining approaches has been proposed~\cite{Tinnes2021}.
\todo[inline]{Furthermore, in model optimization, it has been proposed to apply meta-heuristics... Also comparison of meta-heuristic and RL has been done... I think this is not close enough to mention here.
RL based Model Transformation (for optimization): https://ieeexplore.ieee.org/document/9592463/
}
An advantage of mining approaches over by-example approaches is that they do not require to handcraft examples. A disadvantage is their computational complexity and that negative application conditions and multi-object patterns (e.g., creating a variable count of elements in a model transformation) can not inferred.  Anyway, a post-processing (e.g., using by-example approaches) is conceivable, and it this sense, by-example approaches and mining approaches are orthogonal.

\subsection{Software Model Completion}
\emph{(Software) model completion} is the task of further evolving a software model based on a given (partial) model. More formally:
\begin{definition}[Model Completion]
    Given a set of model transformations $\mathcal{T}$, \emph{model completion} is a computable function $C\colon \mathcal{T} \to \mathcal{T}$ that, given a model transformation $m \stackrel{\varepsilon}{\to} n$ from a source model $m$ to a (partial) target model $n$, computes a model transformation $C(m \stackrel{\varepsilon}{\to} n) = n \stackrel{\gamma}{\to} c$.  
\end{definition}

 The problem of model completion is to find a meaningful completion $m \stackrel{\varepsilon}{\to} n \stackrel{\gamma}{\to} c$. We call the \eo{} $\gamma$ a \ceo{}.
 To implement model completion, some authors proposed to automatically complete a partial model to a model that conforms to the meta-model and possibly other constraints(e.g., via rule based approaches or constraint solving~\cite{ohrndorf2021history,Taentzer2017,sen10completion,steimann2013}). However, these approaches are only able to complete a partial model to a model that conforms to the \mmodel{} or satisfies additional explicitly given constraints.
Other works take existing model repositories or pattern databases into account and employ model clone detection to discover similar (parts of) existing models to make completion recommendations~\cite{Stephan19SimulinkCompletion, Chowdhury2014}. Fine-tuned large language models would come in here very handy, because they encode the software model history in their parameters and can provide context-specific completions. The would not require hand-crafted pattern databases or expensive clone detection in the model repository and are able to generalize to unseen context information. Nevertheless, we are not aware of existing work investigating language models for software model completion.
An overview of model completion approaches is given in the secondary study by Almonte et al.~\cite{almonte22}.

\subsection{Language Models}\label{sec:lm}
\begin{definition}[Language Model]
    A \emph{language model} is a conditional probability distribution $\mathbb{P}(\omega|c)$ for a (sequence of) token(s) $\omega$, given a sequence of context tokens $c$. 
\end{definition}

The probability distribution is typically derived from a \emph{corpus} of documents, containing (some of) the tokens.
This has been done by means of statistical methods in the past. For example, Jelinek and Mercer \cite{jelinek1980ngram} derive the probabilities of so-called n-grams (i.e., sequences of n tokens) from a corpus. 
Motivated by the representation of concepts and language in the human brain, the idea of \emph{distributed representations}~\cite{hinton1986distrepr} has been developed. The idea is that there are efficient representations of (sequences of) words that avoid handling sparse n-gram tables and can carry linguistic and semantic information. Based on this idea, Bengio et al. \cite{bengio00neural} proposed using neural network architectures to learn the probability distribution $\mathbb{P}(\omega|c)$. With the success of transformer architecture \cite{vaswani2017attention}, these models have become quite popular now and are used in plenty of domains including software engineering~\cite{samoaa2022systematic,zhao2021natural,Xu2022LLMCode}. Today neural language models with billions of parameters are \emph{pre-trained} on large corpora of natural language text as well as source code. These pre-trained models can then be \emph{fine-tuned} for specific data sets and applications. It is also worth noting that language models are \emph{generative models}, that is, a language model can be used to generate new sequences of tokens whose probabilistic distribution should approximate the distribution of tokens in the corpora used in training.

\subsection{Further Related Work}

An area of research related to inference of \eo{s} is automatic ``by-example'' program synthesis~\cite{gulwani2017program} such as Flash Meta~\cite{Oleksandr2015}. From the perspective of this kind of research, in the approach presented in this work, we try to learn a grammar. We haven't evaluated language models yet for identifying functional relationships between attributes in software models. Instead, our approach is able to detect patterns in the data and does not need any example pairs to derive these patterns. Therefore, the program synthesis approaches are more similar to the classical model transformation by-example approaches.
For the future, a combination of symbolic inductive programming approaches with neural approaches as the one presented here might be an interesting field of research.

Some researchers investigate the extraction of knowledge graphs from language models~\cite{Swamy2021LM2KM, Garg2022CanLM}. Our idea of extracting \eeo{s} from a fine-tuned language model is similar in that we also try to extract ``learned knowledge'' from a language model. Anyway, our application domain is a different one. 
In cheminformatics and bioinformatics, language models have been investigated for molecular generation, that is, to the generation of (undirected) graphs~\cite{zhumagambetov2021transmol, bagal2021molgpt}. 







Regarding the application of natural language processing and language models, in particular there has been some research activities in the model-driven engineering community:
Burgue{\~n}o et al. propose an NLP-based architecture for the auto-completion of partial domain models. They do not employ language models in their approach and instead use other natural language processing approaches (i.e., word embedding similarity) to recommend words, which are then transformed into model elements in a post-processing~\cite{burgueno2021nlp}.
Weyssow et al. use a long-short-term memory neural network architecture to recommend \mmodel{} concepts but they do not generate entire model completions~\cite{weyssow2022recommending}. 


Furthermore, it needs to be mentioned that for source code, the use of language models for code completion is outperforming other approaches and capable of generating code from natural language input~\cite{chen2021codex, ciniselli22}. Sobania et al.~\cite{Sobania2021GPvsGPT} compare large language model-based GitHub CoPilot to genetic programming based program synthesis. Wan et al.~\cite{wan22} study if abstract syntax tree representations can be retrieved from language models trained on code.
\section{Approach}\label{sec:approach}
\subsection{Motivation}\label{sec:motivation}
As discussed in \secref{sec:eos}, we are interested in identifying \eo{s} that are \emph{meaningful}. Of course, which \eo{} can considered to be meaningful is highly task dependent. As we are interested in the evolution of software models, meaningful \eo{s} could be the ones helping to understand the model evolution. There is evidence that \eo{s} that compress the modelling history the most are meaningful to domain experts~\cite{Tinnes2021}. 

In this spirit, we are interested in identifying the most compressing (pattern) subgraphs in observed \scg{s} (i.e., the ones derived from to successive software model revisions).
Approaches to identify frequent or compressing subgraphs \cite{Yan2002, cook2006mining, Ketkar2005} of a database of graphs often grow the patterns (also called motifs) edge-wise. That is, they start with a single node and extend it edge by edge. Given a context graph $G$, two possible extensions edges $e$ and $e'$, and some metric $M$ (e.g., frequency or compression), the search then favours the extension $e$ with better metric $M(e \cap g) > M(e' \cap g)$ (e.g., in the form of a beam search as in Subdue \cite{Ketkar2005}).  
For a frequency and a compression measure, this condition can then be reformulated to yield a probabilistic formulation:
\begin{alignat}{2}
    &\qquad M(e \cap G) 
    \,&>&\quad M(e' \cap G)\\
    \Leftrightarrow 
    &\qquad \mathbb{P}(e \cap G) 
    &>&\quad \mathbb{P}(e' \cap G)\label{eq:pq2}\\
    \Leftrightarrow 
    &\qquad \frac{\mathbb{P}(e \cap G)}{\mathbb{P}(G)} 
    &>&\quad \frac{\mathbb{P}(e' \cap G)}{\mathbb{P}(G)} \\ 
    \Leftrightarrow 
    &\qquad \mathbb{P}(e \mid G) 
    &>&\quad \mathbb{P}(e' \mid G)\label{eq:pq4}
\end{alignat}
In (\ref{eq:pq2}), we have normalized over the whole dataset to yield a probabilistic formulation and in (\ref{eq:pq4}), we apply the definition of the conditional probability. In this formulation, the extension criteria reminds a lot on the language models from \secref{sec:lm}, with the major difference that language models are probability distributions on sequences of tokens, while the formulation above is for a probability distribution on sets of graphs. 
This suggests that language models can be used to generate serializations of patterns in \scg{s} in an edge-wise fashion.

\subsection{Concrete Approach}
The motivation of \secref{sec:motivation} suggests the following high-level procedure to employ language models for the mining of \eeo{s}: (1)~We serialize \scg{s} of (successive) pairs of software models edge-wise. (2)~We then generate pairs of partial \scg{s} and their completion to the full \scg{} serialization. (3)~These pairs are then used to fine-tune a pre-trained language model. (4)~The fine-tuned language model can then be used to generate serializations of \scg{s}. When a context is already given, we are in a completion setting. (5)~In a last step, we rank the generated serializations from the previous step.

We can divide the procedure into two phases: \trainingPhase{} (Step 1, 2, and 3) and the \generationPhase{} (i.e., Step 4 and 5). In what follows we will describe the steps in more detail.

\subsubsection{\TrainingPhase{}}
The \emph{input} to the \trainingPhase{} is a set of \scg{s} computed from pair-wise (successive) differences of software models in a model repository.
The \emph{output} of the \trainingPhase{} is a fine-tuned language model. 


\textit{Step~(1) -- Serialize the graph components: }
We have to serialize the graph as an edge list. One reason for this is that we want to sample the \scg{s} edge-wise, as suggested in \secref{sec:motivation}. Common formats such as the GraphML\footnote{\url{http://graphml.graphdrawing.org/}} are not suitable, because they start with a list of vertices before they list the edges. It is intuitive from a language model perspective why this is not a suitable format: Suppose our initial \scg{s} are the results of the application of more than one \eeo{}. In this case, the initial \scg{s} will be larger than the \scg{s} of the \eeo{s} that we want to discover. Anyway, the language model will learn to generate serializations statistically similar to the input. This implies that the number of nodes would probably always be too large for a serialization of an \eeo{}. 
We therefore use a simple graph serialization called \edgel{} for directed labeled graphs:
The serialization of every edge has the format
\begin{lstlisting}[frame=single, basicstyle=\sffamily\scriptsize, breakindent=0pt]
e <src_id> <tgt_id> <edge_label> <src_label> <tgt_label>
\end{lstlisting}
where \lstinline[basicstyle=\sffamily\small]{<src_id>} and \lstinline[basicstyle=\sffamily\small]{<tgt_id>} are identifiers for the source and target vertices of the edge, respectively.
The serialization of a graph starts with a header line~in the format 
\begin{lstlisting}[frame=single, basicstyle=\sffamily\scriptsize, breakindent=0pt]
t # <graph_id>
\end{lstlisting}
and then all edges of the graph serialized line~by line. An example is given in Listing \ref{list:excerpt}.


\begin{lstlisting}[frame=single, caption={An example \SCG{} in the \edgel{} format.}, basicstyle=\sffamily\scriptsize, breakindent=0pt, captionpos=b, label={list:excerpt}]
t # 1
e 0 1 Add_port Add_Component Add_Port
e 0 2 Add_requirement Add_Component Add_Requirement
\end{lstlisting}

Another degree of freedom that arises in the serialization step is the order of the edges and the identifiers of the vertices in this serialization. After the order of the edges is chosen, we can enumerate the vertices (e.g., increasing integers in the order they appear when following the edges). This still leaves us with $|E|!$ choices (up to automorphism), which can quickly become impractically large. 
There are \emph{canonical graph serializations}~\cite{McKay2014iso}, but they do have exponential worst-case time complexity and do not help for our task, because these serializations do not ensure that pattern subgraphs appear with their own canonical ordering\footnote{Otherwise graph isomorphism problem and subgraph isomorphism problem would be in the same complexity class, which is not thought to be the case.}. Therefore, choosing \emph{some} of the possible $|E|!$ edge orderings will probably lead to better results. In this work we chose one ordering (obtained through a depth-first search) that worked well for our data, and we leave the investigation of the influence of further edge orderings for future research. \todo[inline]{The main reason for this limitation is costs! we used GPT-3 language model series and also the davinci model in our experiments, which is quite expensive. Now that we know---as one result of this work---that davinci does not perform significantly better than cheaper models, it would be interesting to fix a cheaper language model like ada and blow more data, i.e., more serializations through it.}

\textit{Step~(2) -- Randomly split serializations in prompt and completion:}
As input for the fine-tuning of language model, we provide a set of context and completion pairs. From every \edgel{} serialization from the previous step, we generate three training samples: We compute three cut points. One cut point is randomly chosen in the first 10\% of the edges, the second randomly between the first and the last 10\% of the edges, and the third cut point randomly among the last 10\% of the edges. For every cut point, we then obtain one training sample by taking the edges before the cut point as context and the edges after the cut point as the completion.

\textit{Step~(3) -- Fine-tune a language model:}
In this step, we use the dataset obtained above to fine-tune a pre-trained language model. Specifically, we use so-called autoregressive language models (the GPT-3 model family), although other types of language models are also conceivable. In autoregressive language models, only the probabilities for the next token in a sequence is predicted based on the context (i.e., the previous tokens). This way, we obtain a language model that is fine-tuned to the \scg{} serializations computed for a specific model repository.

\subsubsection{\GenerationPhase{}}
The fine-tuned language model can now be applied to generate \eeo{s}, on the one hand, and \ceo{s}, on the other.

\textit{Step~(4) -- Generate \scg{s} for \eeo{s} and \ceo{s}:}
The main difference between the generation of \eeo{} candidates and \ceo{} candidates is the context. For the generation of \ceo{} candidates, we use the \scg{} serialization of an observed evolution $m_1 \stackrel{\varepsilon}{\to} m_{2}$ as the context in the generation of candidates. For the generation of \eeo{s}, we use an ``empty edge'' context (i.e., the token  \lstinline[basicstyle=\sffamily\small]{e} to be more precise).

The candidate generation works as follows (see pseudo code in Listing \ref{lst:candidateGeneration}):
The algorithm takes a set of \emph{incomplete} \eo{} candidates (in the form of serialized \scg{s}) and uses the fine-tuned language model to sample new edge candidates and appends them to the incomplete \eo{} candidate (Line~12). 
The sampling generates all possible extensions above a certain probability threshold.
Since we cannot guarantee that the extensions lead to a correct \edgel{} serialization, we check the syntactical correctness and reject incorrect extensions (Line~13). Furthermore, even syntactically valid extensions could be invalid according to the \mmodel{} and have to be rejected likewise (Line~14). After that, the corresponding \scg{} represents a valid \eo{} by definition. Based on a graph isomorphism test, we then filter out duplicates (Line~15). Although graph isomorphism is theoretically expensive from a computational perspective, in our setting, it is acceptable since we have only a few medium size graphs, and employ Weisfeiler-Lehman hashes~\cite{huang2021short} to speed up the comparison.  We add complete candidates to the output list (Line~19) and repeat this process until all candidates are complete (Line~9). Whether a candidate is complete is checked using several conditions such as the total probability of the candidate, a drop in the probability of a generated edge, or a generated stop token.

\begin{figure}[bt]
			\centering
			\includegraphics[width=\columnwidth]{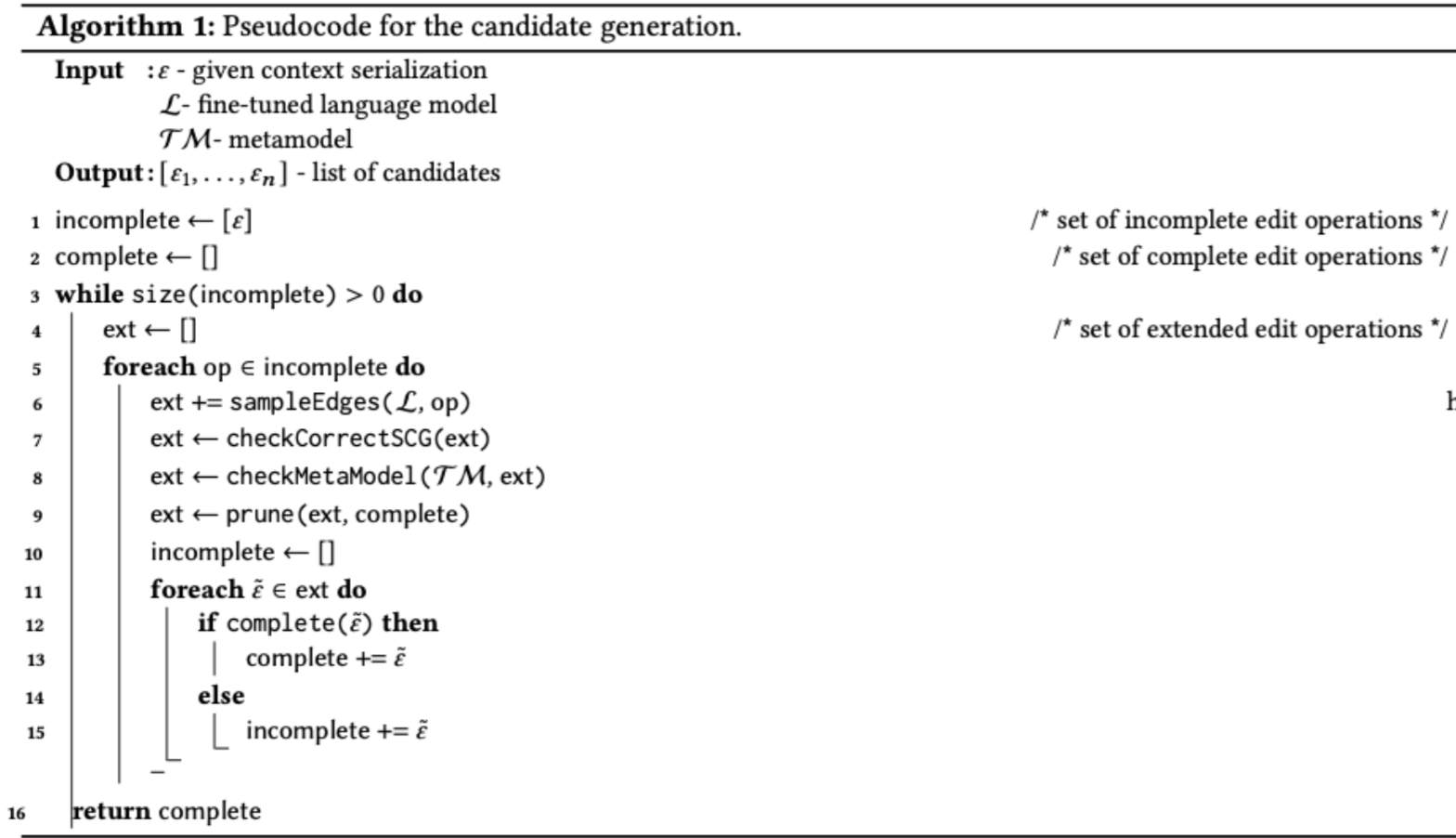}
			\caption{Pseudocode for the candidate generation.}
            \label{lst:candidateGeneration}
\end{figure}

\textit{Step~(5) -- Ranking of generated candidates:}
In the last step, we rank the generated candidates from Step 4. Several ranking metrics are conceivable, for example, the probability for a candidate given by the token probabilities of the language model, the compression metric as used in the work of Tinnes et al. \cite{Tinnes2021}, or also scaled variants of the token probabilities such as scaling with the number of edges or nodes. We will evaluate different ranking metrics.
\section{Evaluation}\label{sec:evaluation}
In this section we evaluate to what extent language models, as exemplified by our approach (\secref{sec:approach}), can help to derive \eeo{s} and \ceo{s} from the software model history.

\subsection{Research Questions}
To better understand the merits of language models for \eo{} and \ceo{} mining, we set out to answer the following research questions:


\begin{tcolorbox}[arc=0pt,outer arc=0pt, boxrule=0pt, breakable,  sharpish corners,enhanced, drop lifted shadow]
\textbf{RQ 1:} \textit{Using \approach{}, are language models capable of generating correct \scg{} graph serializations?}
\end{tcolorbox}
\vspace{-0.5em}
Since a language model is not aware of the meta-model and definition of a graph \emph{per se}, the generated \eeo{s} or \ceo{s} might not be correct \scg{} serializations. That is, they might be invalid according to the \mmodel{} (e.g., invalid combination of edge, source, and target node labels) or could even be invalid directed labeled graph serializations (i.e., not adhere to the \edgel{} format). We are particularly interested how this depends on the properties of the dataset and the properties of the language model used for the approach.

\begin{tcolorbox}[arc=0pt,outer arc=0pt, boxrule=0pt, breakable,  sharpish corners,enhanced, drop lifted shadow]
\textbf{RQ 2:} \textit{Using \approach{}, are language models capable of providing \ceo{s} for software models?}
\end{tcolorbox}
\vspace{-0.5em}
The approach uses a language model that is trained to complete the \scg{} serializations. It optimizes the \emph{token probability}, given a context. This does not ensure \emph{per se} that the provided completions represent \scg{s} that are isomorphic to the \emph{correct} \scg{s}. Therefore, we are interested in to which extent the completed and the original \scg{} coincide and how this depends on the properties of the dataset and of the language model.  

\begin{tcolorbox}[arc=0pt,outer arc=0pt, boxrule=0pt, breakable,  sharpish corners,enhanced, drop lifted shadow]
\textbf{RQ 3:} \textit{Using \approach{}, can \eo{s} be reconstructed from the language model?}
\end{tcolorbox}
\vspace{-0.5em}
The main idea of \approach{} is that the language model leverages patterns in the training data while generating text. Therefore, it should be possible to read out the patterns from the language model. The approach presented here is just one idea how the patterns can be retrieved from the language model, and we have to evaluate this approach empirically. We further evaluate how the generation of \eeo{} candidates depends on the properties of the dataset and the properties of the language model used for the approach.

\begin{tcolorbox}[arc=0pt,outer arc=0pt, boxrule=0pt, breakable,  sharpish corners,enhanced, drop lifted shadow]
\textbf{RQ 4:} \textit{Which ranking metric for the \eo{} candidates performs best?}
\end{tcolorbox}
\vspace{-0.5em}
There are several possibilities to rank the list of generated \eo{} candidates, including language model probability, the probability scaled by the factorial of the number of edges, or scaled by the number of edges, or a more computationally expensive compression-like metric as used in a previous approach~\cite{Tinnes2021}. The idea behind the scaled metrics is that -- as discussed already above -- there are several possible \scg{} serialization (up to  $|E|!$). At least, the probability will inevitably decrease with the number of edges and we have to account for this to avoid favouring smaller \eo{} candidates.

\subsection{Dataset}\label{sec:dataset}
As we will discuss in \secref{sec:threats}, the goal of this work is to study the merits of language models for \eeo{} and \ceo{} mining with a high internal validity. To answer the research questions, we simulated the evolution of a software model similar to previous work~\cite{Tinnes2021}. This gives us control over the \eo{s} that have been applied to yield the model history. For this simulation, we used a \mmodel{} from Tinnes et al. \cite{Tinnes2021} that resembles a simple component model (as used in modelling system architecture) with {\sf components}, {\sf implementations}, {\sf ports}, {\sf connectors}, and {\sf requirements}. We then randomly apply \eo{s}. Specifically, we applied three different kinds of edit operations (i.e., adding a component, adding an interface, and adding a new package including a component). We controlled for the number of edit operations that are applied per model revision (i.e., 11, 31, 51, 81) and the number of model revisions in one dataset (i.e., 10 or 20). We furthermore randomly applied perturbations, that is, with a certain probability (i.e., 0\%, 50\%, 100\%), we slightly modified the edit operation by a successive application of an additional \eo{} that overlaps with the original \eo{}. This way, we obtain \countRepo{} simulations of a software model evolution (see \figref{fig:dataset}) and we know which \eo{s} had been applied between two model revisions of the repositories.

\begin{figure}[bt]
			\centering
			\includegraphics[width=\columnwidth]{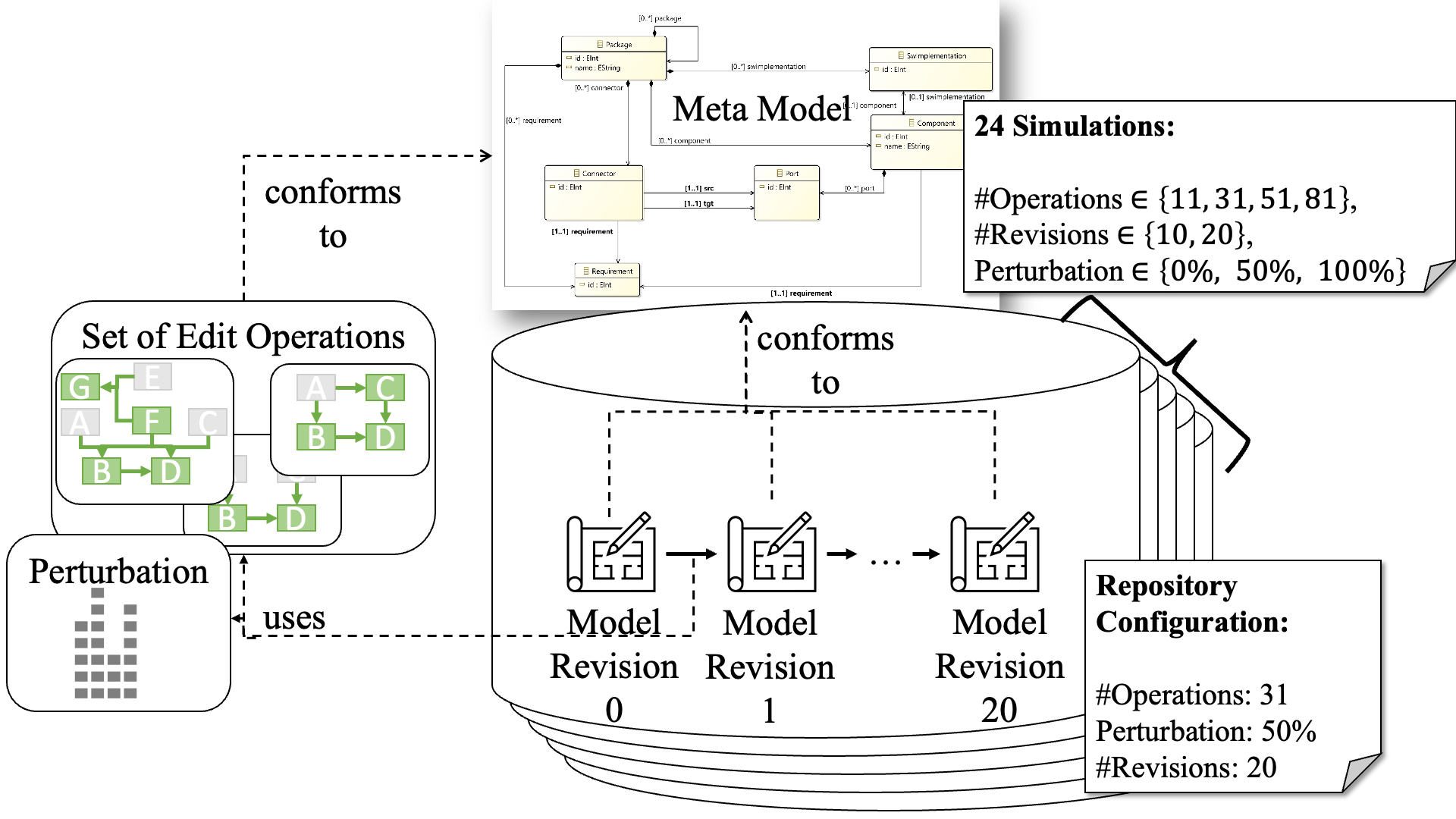}
			\caption{Simulation of the model repositories.}
			\label{fig:dataset}
\vspace{-2em}
\end{figure}

\subsection{Operationalization}
    For every simulated repository in the dataset from \secref{sec:dataset}, we applied the steps of the \trainingPhase{}, described in \secref{sec:approach}. We controlled for the number of epochs (i.e., 4 and 6) and the base language model used for the fine-tuning (i.e., \ada{}, \curie{}, and \davinci{} from the GPT-3 family of language models). Since fine-tuning the \davinci{} model is quite expensive (i.e., 3 Cents per thousand tokens at the time of writing), we fine-tuned this model only for the model repositories where perturbation probability equals 100\% (the ones which are typically the harder ones). We therefore have to report separately if we also include the \davinci{} model in the analysis. We split the datasets (for each of the repositories from \secref{sec:dataset}) in a training and a test set (90\% of the samples in the train set and 10\% in the test set), so that we can report on the performance of the (textual) completion task the models were trained on.

    The total cost for the training via the OpenAI API was \totalCost{}.
    
	\textbf{Experiment 1:} To answer the first research question, we use the fine-tuned language models and apply the \eeo{} candidate generation from \secref{sec:approach}. During the edge extension in the \generationPhase{}, we checked whether the completion actually corresponds to a correct graph serialization. We also checked whether the edge extension corresponds to a correct extension according to the meta-model, that is, whether the generated graph actually corresponds to a \scg{}. We then report how many of the extensions during the generation were incorrect and how the number of incorrectly generated serializations depends on the base language model, the number of training epochs, the perturbation parameter, and the number of training tokens in the dataset.
	
	\textbf{Experiment 2:} We fine-tune the language models based on a model completion task. Therefore, the procedure from \secref{sec:approach} for deriving fine-tuning data from a model repository together with the training test split described above yield data that can be used to evaluate whether language models can be used for model completion, answering the second research question. 
    We investigate the \scg{} completion from two perspectives:
    First, we investigate the average token accuracy on the test set during the fine-tuning of the language model. The  \emph{average token accuracy} gives us the relative number of correctly retrieved tokens. The metric is not aware of any specifics of the dataset. For example, even a single wrong token in a serialization can produce a syntactically wrong serialization while the token accuracy for this can still be high. We therefore also analyze the \ceo{} candidates from a graph matching perspective. Since generating all completion candidates for all test samples of all fined-tuned language models will be quite expensive, we select \numCompletionModels{} fine-tuned language models and perform the analysis for them. From the set of generated \ceo{} candidates, we especially look at two candidates: the one providing ``the best completion'' and the ``top-ranked completion'' using the edge-scaled ranking metric (see Experiment 4). In some sense, these two selected candidates give us an upper and a lower bound for the performance on the software model completion task. To make completions comparable, we assign a numeric value to them. This is possible, since we define completion candidates isomorphic to the correct graph to be better than completion candidates that are too large (i.e., the ground truth is a subgraph of the completion candidate), which themselves are better than completion candidates that are too small, which again are better than incorrect completion candidates (i.e., some edges missing and some additional edges).
    

	\textbf{Experiment 3:} To answer RQ 3, we evaluate the generation of \eeo{} candidates from the fine-tuned language models. To this end, we apply the approach from \secref{sec:approach} to the synthetic model repositories. Since we know the \eo{s} that have been applied, we can directly look for the applied \eo{s} in the list of \eeo{} recommendations. We count the number of correct \eo{} candidates that have been generated. There are three correct \eo{s} in total, and each of them has been applied in every software model to our synthetic dataset (see \figref{fig:dataset}). Additionally, we investigate how the number of correctly retrieved \eo{s} depends on repository as well as language model parameters. We will also investigate the costs for the generation of \eeo{} candidates.


    
    \textbf{Experiment 4:} As in experiment 3, we apply the approach to the synthetic model repositories. We compute the different ranking metrics (i.e., language model probability, the probability scaled by the factorial of the number of edges, probability scaled by the number of edges, and compression-like metric) for all generated \eeo{} candidates. We then compare the different rankings based on the given ground truth, that is, the rank of the known applied \eo{}.
    To compare the different ranking metrics, we use the \emph{mean average precision at k} (MAP@k), which is commonly used in the evaluation of recommender systems~\citep{schroder2011setting}:
    \begin{equation*}
    \operatorname{MAP@k} := \frac{1}{|D|} \sum_{\text{D}} \operatorname{AP@k} \, \text{,}
    \end{equation*}
    where $D$ is the family of all datasets (one dataset represents one repository) and AP@k is defined by 
    \begin{equation*}
    \operatorname{AP@k} := \frac{\sum_{i=1}^{k} \operatorname{P}(i) \cdot \operatorname{rel}(i)}{{|\text{all correct \scg{s}}|}} \, \text{, } 
    \end{equation*}
    where P($i$) is the precision at $i$, and rel($i$) indicates if the candidate at rank $i$ is relevant. 

\subsection{Results}
\textbf{Experiment 1:}
For 51.8\% of the simulated repositories, the generation procedure produced exclusively valid graphs, for 48.2\% it produced also only correct \scg{} (i.e., also correct with respect to the \mmodel{}). On average, $2.26$ invalid graphs are generated in the \generationPhase{}, with a minimum of $0$ and a maximum of $30$. A constraint in the given \edgel{} serialization is that a node with a given id has to appear always with the same label (i.e., the node labels are redundantly encoded in \edgel{}). The only type of violation against a valid \edgel{} encoding we have encountered in this experiment was that this correspondence of node id and node label has been violated. 
A manual inspection of the data for the model repositories with a large amount of invalid graphs ($>5$) reveals that these are the ``smaller'' datasets with mostly a high perturbation. For example, the repository with only 10 revisions, 11 applied \eo{s}, and a perturbation of 100\% is the one with the maximum of $30$ invalid graphs.

In \tblref{tbl:cor_faulty_graphs} we report the correlation coefficients (Spearman\footnote{We use Spearman correlation instead of Pearson correlation, since we can not assume that there are linear dependencies.}) between invalid graphs and invalid \scg{s}  and the base language model (BM), the number of training epochs, the perturbation probability (P), and the number of training tokens of the dataset (T). For the correlation with the base language model, we sort the base models according to their size (i.e., ada:~0, curie:~1, davinci:~2).

\begin{table}[bt]
\caption{Correlation (Spearman) between invalid graphs generation with other parameters. (**: $p < .001$, *:~$p < .01$)}
\centering
\resizebox{\linewidth}{!}{%
\begin{tabular}{rSSSS}
  \toprule
 & {BM} & {Epochs} & {P} & {T}\\ 
  \midrule
  \#Invalid &  -0.08  & -0.10  &  0.35\,** & -0.33*  \\ 
  \#Invalid \mmodel{} & -0.22  & -0.05  & 0.31*  & -0.22   \\ 
  \#Invalid (+ davinci) & -0.28  & -0.11  & \text{--} & -0.56** \\ 
  \#Invalid \mmodel{} (+ davinci) & -0.24  & -0.13  & \text{--} & -0.56**    \\ 
   \bottomrule
\end{tabular}}
\label{tbl:cor_faulty_graphs}
\vspace{-1em}
\end{table}

We observe significant positive Spearman correlation between the number of invalid generated graph serializations and the perturbation parameter. Furthermore, there is a significant negative Spearman correlation between the number of invalid generated graph serializations and the number of tokens in the training set. 

\textbf{Experiment 2:}
At the token level, we find an average token accuracy of \averageTokenAccuracy{}, with a minimum of \minTokenAccuracy{}, and a maximum of \maxTokenAccuracy{} on our test data set.

Only \averageNumberCompletions{} completion candidates are generated, on average. For a large number of samples, only one \ceo{} candidate has been generated. In all of theses cases, the only candidate has also been the correct one.
On average, in \averageCorrectCompletion{\%} of the samples, the correct completion is among the candidates, with an average rank of \averageRankCompletion{} (w.r.t. the edge-scaled ranking).

In \tblref{tbl:cor_completion} we report on the Spearman correlation coefficients of the score of the completion candidates with the number of omitted edges (i.e., the ones that have to be computed), the total number of edges of the ground truth \scg{}, and the number of completions that have been generated.

\begin{table}[bt]
\caption{Correlation coefficients (Spearman) of the completion candidate score with the number of omitted edges in the sample, the number of total edges of the correct \scg{}, and the number of completion candidates that have been generated. (*: $p < .001$)}
\centering
\resizebox{\columnwidth}{!}{%
\begin{tabular}{rSSS}
  \toprule
 & {\#Omitted Edges} & {\#Total Edges} & {\#Completions}\\ 
  \midrule
  Score (best rank) & -0.69* & -0.38* &  -0.75*\\ 
  Score (best candidate) & -0.29* & -0.33* & -0.21*\\ 
   \bottomrule
\end{tabular}}
\label{tbl:cor_completion}
\end{table}

We see significant negative correlations between the score (of the best generated candidate and the best ranked candidate) and the number of edges that have to be completed, the number of edges from the full original \scg{}, and the number of completions candidates that have been generated. 

\textbf{Experiment 3:}
On average, out of the three applied \eo{s}, we could retrieve \averageCorrectAda{} for the \ada{} model, \averageCorrectCurie{} for the \curie{} model, and \averageCorrectDavinci{} for the \davinci{} model. The \davinci{} model has only been trained to the datasets that appeared to be difficult for \ada{} and \curie{} (i.e., perturbation probability of 100\%). 

In \tblref{tbl:cor_retrieved}, we list the Spearman correlations of the correctly retrieved \eo{s} with the base model (BM), the number of training epochs, the perturbation probability (P), and the number of applied \eo{s} between two model revisions~(E).

\begin{table}[bt]
\vspace{-0.75em}
\caption{Correlations between the correctly retrieved \eo{s} and repository as well as language model parameters. (*: $p < .001$)}
\centering
\resizebox{\columnwidth}{!}{%
\begin{tabular}{rSSSS}
  \toprule
 & {BM} & {Epochs} & {P} & {E}\\ 
  \midrule
   \#Correct &  -0.33*  & -0.03  &  -0.80* & 0.10  \\ 
   \#Correct (+ davinci) & -0.22  & 0.08  & \text{--} & 0.28    \\ 
   \bottomrule
\end{tabular}}
\label{tbl:cor_retrieved}
\vspace{-1em}
\end{table}

The average generation cost for \ada{} was 0.45~Cent, for \curie{} 3.68 Cent, and for \davinci{} 51.84~Cent.

\textbf{Experiment 4:}
We list the MAP scores for the 4 different ranking metrics in \tblref{tbl:map_scores}.
Furthermore, we compare the average precisions obtained through the different ranking. We can observe a high significant ($p < .001$) Spearman correlations coefficients~$>0.65$ among all of them, the largest one between compression metric and node factorial scaled probability metric ($0.90$).

\begin{table}[b]
\caption{MAP for the different ranking metrics. Grey background indicates the best MAP among all metrics.}
\centering
\resizebox{\columnwidth}{!}{%
\begin{tabular}{lSSSS}
  \toprule
 & {Compression} & {Factorial} & {Probability} & {Edges-Scaled}\\ 
  \midrule
  MAP@3 & 0.32 & 0.20 & 0.13 & \cellcolor[gray]{0.9} 0.33 \\ 
  MAP@5 & \cellcolor[gray]{0.9}0.38 & 0.29 & 0.17 & 0.36 \\ 
  MAP@10 & \cellcolor[gray]{0.9}0.41 & 0.32 & 0.24 & 0.40 \\ 
  MAP@$\infty$ & \cellcolor[gray]{0.9}0.42 & 0.33 & 0.25 & 0.41 \\ 
   \bottomrule
\end{tabular}}
\label{tbl:map_scores}
\vspace{-2em}
\end{table}

\subsection{Discussion}
\textbf{RQ 1:} From our results, we can conclude that with our current approach we generate mostly valid graph serializations. Indeed, even for a majority of the simulated repositories, we do not generate invalid graph serializations, at all, and, on average, \avgIncorrectGraphs{} invalid candidates per \generationPhase{}. If we consider correct \scg{s}, only \avgIncorrectScgs{} illegal \scg{s} (i.e., they do not conform to the \mmodel{} or the definition of \scg{s}) are generated, on average. The analysis also shows that the repositories for which we also get invalid graph serializations are the smaller ones (in the number of training tokens), with a high perturbation. The dependency on the size of the repository and the perturbation is also significant (as can be seen in \tblref{tbl:cor_faulty_graphs}). This suggests, that one should use larger repositories or even pre-train the language model with \scg{} serializations from other repositories. We also observe a small negative but insignificant correlation w.r.t. the size of the base language model that has been used for the fine-tuning. This suggests that larger base language models might perform better in the generation of \scg{}. Given that \davinci{} is 50 times as expensive as \ada{}, and this correlation is not significant, we can conclude that \ada{} is an acceptable choice for the base language model for the generation of \scg{s}. 

\begin{tcolorbox}[arc=0pt,outer arc=0pt, boxrule=0pt, breakable,  sharpish corners,enhanced, drop lifted shadow]
\vspace{-0.5em}
\textbf{Summary:} 
Overall, invalid graphs are generated only rarely. Smaller repositories are more likely to lead to invalid graphs than larger repositories. 
\end{tcolorbox}

\textbf{RQ 2:}
From the results, we can clearly conclude that, in a majority of the samples, the correct completions have been generated. Also, the number of \ceo{} candidates is typically low and the correct \ceo{} (if among the candidates) is typically top ranked. The larger the \scg{} and the more edges we omit for the completion, the worse the score of the completion candidates. Anyway, for the larger graphs, a subgraph of the correct completion was among the candidates in most cases.

\begin{tcolorbox}[arc=0pt,outer arc=0pt, boxrule=0pt, breakable,  sharpish corners,enhanced, drop lifted shadow]
\vspace{-0.5em}
\textbf{Summary:} 
A manageable amount of \ceo{} candidates are generated and in a majority of the cases the correct completion has been generated by the language model. 
Larger \scg{s} (in number of edges) and a larger number of omitted edges are more challenging than smaller ones, which is rather intuitive.
\end{tcolorbox}

\textbf{RQ 3:}
Overall, using the approach from \secref{sec:approach}, we were able to retrieve two and, in some cases, even all of the applied \eo{s}. The perturbation seems to be the major influencing factor that increases the difficulty.
Using larger language models (i.e., \davinci{}) did not improve the results. We can even see a decrease of the number of retrieved \eo{s} with increasing language model size.
The costs for using the language models are definitely acceptable (0.45 Cent to 51.84 Cent), especially given that the more expensive language models do not perform any better.

\begin{tcolorbox}[arc=0pt,outer arc=0pt, boxrule=0pt, breakable,  sharpish corners,enhanced, drop lifted shadow]
\vspace{-0.5em}
\textbf{Summary:} 
We were able to retrieve \eeo{s} from a fine-tuned language model. However, with increasing perturbation probability, the approach yields worse results. 
\end{tcolorbox}

\textbf{RQ 4:}
From the results of Experiment 4, we can see that, for $k>3$ (where $k$ is the number of considered generation candidates), the compression metric outperforms the other metrics. Since the compression metric is expensive to calculate, we also tried to recompute it from the probability given by the language model. Although we can observe high correlations among the average precisions, none of the three metrics yields the same ranking as the compression-based ranking. For practical purposes, the edges-scaled probability metric is most feasible, since it gives results close to the compression metric and does not require any expensive calculations.

\begin{tcolorbox}[arc=0pt,outer arc=0pt, boxrule=0pt, breakable,  sharpish corners,enhanced, drop lifted shadow]
\vspace{-0.5em}
\textbf{Summary:} 
We can confirm earlier results~\cite{Tinnes2021} that the compression metric seems to be a good metric to select \eeo{s}. There seems to be a -- yet unknown -- relationship between the probability given by the language model and the compression metric from earlier work.
\end{tcolorbox}


\textbf{Frequent subgraph mining vs. language models:}
In a recent study \cite{Tinnes2021}, frequent subgraph mining has been proposed for \eo{} mining.
First, this approach cannot be applied to model completion out-of-the-box, because it does not yield a conditional distribution that can be used for completion. 
Second, regarding \eo{} mining, frequent subgraph mining is challenging from a computational/memory complexity point of view. This sometimes requires omitting large \scg{s} in the mining. The reason for this is that subgraph matching requires to try many possible combinations that vertices from the subgraph can be matched to the super graph. 
Using language models, we circumvent this issue and train on sequential -- instead of graph-like -- data. The training resources (and therefore also cost) scale linearly with the number of tokens. A graph would be ``equivalent'' to all of its possible serializations (i.e., approximately~$|E|!$). Training a language model on all of theses serializations would then be infeasible. Anyway, using a language model as described in \secref{sec:approach}, we get control over the number of serializations and we have seen that considering only one serialization per graph yields promising results. 

\subsection{Threats To Validity}\label{sec:threats}
We have evaluated the proposed approach in a controlled experiment setting, this way, maximizing internal validity. However, further experimentation and evaluation is required before it can be considered for implementation in real-world software engineering projects. Our research is currently in the simulation phase. This stage of research is similar to the early stages of drug development in the pharmaceutical industry, where candidates for a treatment are first evaluated through simulations and small-scale experiments before moving on to larger studies. 
One of the main reasons to take such a staged approach is that the application of language models to large-scale industrial experiments requires training on large model repositories. Training on all possible simple change graphs and serializations is infeasible from a cost perspective. We therefore have to better understand large language models in the domain of model generation to use a suitable sampling in real-world applications. Furthermore, without cost-intensive legal considerations, real-world data can not be uploaded to the GPT-3 API and therefore we would have to rely on much smaller, non-state-of-the-art language models. 
So, we made the explicit decision to trade-off external validity for internal validity~\cite{siegmund2015views} before moving on to the next stage, that is, an application to real-world models. 
The purpose of the present study is to understand the merits of language models for \eeo{} and \ceo{} mining. 


With respect to internal validity, we have chosen those properties and parameters that intuitively have the highest impact on fine-tuning language models, although we cannot control for any arbitrary property of the modelling language or the model repository (because of combinatorial explosion).
Several design decisions (e.g., which parameters to fix and which to vary) have to be made before language models can be applied to the domain of software model completion (e.g., serialization strategy, graph encoding, choice of the base model, etc.). Other design decisions could have led to other conclusions and there is still room for improvement and optimization of language models for \eeo{} mining and model completion.
Furthermore, the base models (i.e., GPT-3) that were fine-tuned in this work are only available through an API. We wanted to conduct this study with state-of-the-art models, and at the time the experiments were conducted, GPT-3 was the state-of-the-art without significant competition. However, it would also be interesting to compare a larger set of language models on the above tasks, although this is well beyond the scope of the present study.
\section{Conclusion}
In this work, we have evaluated the principle feasibility of using language models (i.e., GPT-3) for extracting \eeo{s} from software model repositories and for model completion. A key advantage is that a language model has to be fine-tuned only once and can then be used for \eeo{} mining and model completion. We presented a formalization of both tasks that makes the synergy between them more obvious. 

To evaluate the use of language models for \eeo{} mining and model completion, we conducted a controlled experiment with a synthetic software model repository. 
We found promising results for both tasks (\eeo{} mining and model completion), even without elaborate optimizations. Our approach fine-tunes language models on a graph representation of model differences. It was able to correctly complete a majority of serialized difference graphs from a synthetic test dataset. Furthermore, some of the \eo{s} that have been applied to generate the synthetic difference graphs could be generated via the fine-tuned language model, although we were not able to always generate all applied \eo{s}. 

For future work, it is necessary to optimize the approach, for example, investigating the influence of adding more \scg{} serializations during the fine-tuning of the language model. Another way of optimization is to further fine-tune the approach in a supervised manner, on a set of correct and incorrect completions or a set of known \eeo{s}. Also, in a productive environment, user-feedback may be involved in the generations by wrapping the language model in a reinforcement learning approach.

Optimized versions of the approach have to be evaluated in more realistic, real-world settings.
Also, it would be interesting to investigate whether language models are able to learn attribute relationships (e.g., functional relationships between attributes) and learn to abstract (e.g., automatically discover ``is-a'' relationships).

\section{Data Availability}
We have provided data as well as Python scripts for our approach---to replicate the results of this paper---as a replication package. We furthermore provide the R scripts to replicate our statistical evaluation. The replication package will be made public (e.g., GitHub) in case of acceptance.


\def\bibfont{\footnotesize}
\bibliographystyle{plain}
\bibliography{bib}

\end{document}